\documentclass[12pt]{article}
\usepackage[utf8x]{inputenc}
\usepackage{graphicx}
\usepackage{enumerate}
\usepackage{units}
\usepackage{multicol}
\usepackage{amsmath}
\usepackage{caption}
\usepackage{subcaption}
\usepackage{parskip}
%\captionsetup[subfigure]{labelformat=empty}
\usepackage[english]{babel}
\usepackage{amsthm}
\usepackage{amssymb}
\usepackage{MnSymbol,wasysym}
\usepackage{siunitx}
\usepackage[ruled,vlined]{algorithm2e}
\usepackage[top=1in, bottom=1in, left=1in, right=1in]{geometry}

\usepackage{verbatim}
\usepackage{color}
\usepackage{xcolor}
\usepackage{mathtools}
\usepackage{stmaryrd}
\usepackage{listings}
\usepackage{natbib}
\usepackage{mathtools}
\mathtoolsset{showonlyrefs}
\newcommand{\E}{{\rm I\kern-.3em E}}

% fix theorem spacing
\begingroup
    \makeatletter
    \@for\theoremstyle:=definition,remark,plain\do{%
        \expandafter\g@addto@macro\csname th@\theoremstyle\endcsname{%
            \addtolength\thm@preskip\parskip
            }%
        }
\endgroup

% \lstset{
%     showstringspaces=false,
%     frame=single,
%     breaklines=true,
%     postbreak=\raisebox{0ex}[0ex][0ex]{\ensuremath{\color{red}\hookrightarrow\space}},
%     numbers=left
% }
% \renewcommand\lstlistingname{Implementation}
% \renewcommand\lstlistlistingname{Code Table of Contents}

\DeclareFixedFont{\ttb}{T1}{txtt}{bx}{n}{12} % for bold
\DeclareFixedFont{\ttm}{T1}{txtt}{m}{n}{12}  % for normal

\definecolor{mygreen}{rgb}{0,0.6,0}
\definecolor{myblue}{rgb}{0.09,0.57,0.98}

\definecolor{deepgreen}{rgb}{0,0.4,0}

\lstdefinestyle{python}{
	language=python,
	basicstyle=\color{black}\ttfamily\footnotesize,
	stringstyle=\color{deepgreen}\slshape,
	commentstyle=\color{gray}\slshape,
	keywordstyle=\color{red}\bf,
	emphstyle=\color{blue}\bf,
	tabsize=2,
	%%%%%%%%%%%%%%%
	showstringspaces=false,
	emph={access,and,break,class,continue,def,del,elif,else,except,exec,finally,for,from,global,if,import,in,i s,lambda,not,or,pass,print,raise,return,try,while,as},
	upquote=true,
	morecomment=[s]{"""}{"""},
	literate=*
	{:}{{\textcolor{blue}:}}{1}%
	{/}{{\textcolor{blue}/}}{1}%
	%{=}{{\textcolor{blue}=}}{1}%
	%{-}{{\textcolor{blue}-}}{1}%
	%{+}{{\textcolor{blue}+}}{1}%
	{*}{{\textcolor{blue}*}}{1}%
	{!}{{\textcolor{blue}!}}{1}%
	%{(}{{\textcolor{blue}(}}{1}%
	%{)}{{\textcolor{blue})}}{1}%
	%{[}{{\textcolor{blue}[}}{1}%
	%{]}{{\textcolor{blue}]}}{1}%
	%{<}{{\textcolor{blue}<}}{1}%
	%{>}{{\textcolor{blue}>}}{1},%
	%%%%%%%%%%%%%%%%
	aboveskip=\baselineskip,
	xleftmargin=20pt, xrightmargin=15pt,
	frame=lines,
	numbers=left, numberstyle=\tiny
}
\lstnewenvironment{python}{\lstset{style=python}}{}

\lstset{
    showstringspaces=false,
    frame=single,
    breaklines=true,
    postbreak=\raisebox{0ex}[0ex][0ex]{\ensuremath{\color{red}\hookrightarrow\space}},
    numbers=left
}
\lstset{
language=Python,
commentstyle=\color{gray},
basicstyle=\ttm,
tabsize=4,
otherkeywords={self},% Add keywords here
keywordstyle=\ttb\color{blue},
emph={MyClass,__init__},          % Custom highlighting
emphstyle=\ttb\color{red},    % Custom highlighting style
stringstyle=\color{mygreen},
frame=tb,                         % Any extra options here
showstringspaces=false,            %
breaklines=true,
postbreak=\raisebox{0ex}[0ex][0ex]{\ensuremath{\color{red}\hookrightarrow\space}},
numbers=left
}

\newtheorem{theorem}{Theorem}
\newtheorem{lemma}{Lemma}
\newtheoremstyle{component}{}{}{}{}{\it}{.}{.5em}{\thmnote{#3}#1}
\theoremstyle{component}

\title{Cache-Friendly Search Trees; \\or, In Which Everything Beats {\tt std::set}}
\author{Jeffrey Barratt\\\texttt{jbarratt} \and Brian Zhang\\\texttt{bhz}}
\date{\today}

% \makeatletter
% \def\@seccntformat#1{%
%   \expandafter\ifx\csname c@#1\endcsname\c@section\else
%   \csname the#1\endcsname\quad
%   \fi}
% \makeatother
% \newcommand\tab[1][1cm]{\hspace*{#1}}

\begin{document}

\maketitle

% %http://web.stanford.edu/class/cs166/handouts/090%20Final%20Project%20Requirements.pdf << spec

\section{Introduction}

While a lot of work in theoretical computer science has gone into optimizing the runtime and space usage of data structures, such work very often neglects a very important component of modern computers: the cache. In doing so, very often, data structures are developed that achieve theoretically-good runtimes but are slow in practice due to a large number of cache misses. In 1999, \cite{frigo1999cache} introduced the notion of a {\it cache-oblivious algorithm}: an algorithm that uses the cache to its advantage, {\it regardless of the size or structure of said cache}. Since then, various authors have designed cache-oblivious algorithms and data structures for problems from matrix multiplication to array sorting \citep{demaine2002cache}. We focus in this work on cache-oblivious search trees; i.e. implementing an ordered dictionary in a cache-friendly manner. We will start by presenting an overview of cache-oblivious data structures, especially cache-oblivious search trees. We then give practical results using these cache-oblivious structures on modern-day machinery, comparing them to the standard {\tt std::set} and other cache-friendly dictionaries such as B-trees.

\section{The Ideal-Cache Model}

To discuss caches theoretically, we first need to give a theoretical model that makes use of a cache. The {\it ideal-cache model} was first introduced by \cite{frigo1999cache}. In this model, a computer's memory is modeled as a two-level hierarchy with a {\it disk} and a {\it cache}. The cache has size $M$ and is split into {\it blocks} of size $B$ each\footnote{The letter $B$ will conflict later on with the $B$ in $B$-tree. This will be annoying, but we continue anyway and hope the reader is not too confused.}. The cache is assumed to be:
\begin{enumerate}[(1)]
    \item {\it fully-associative}: any block in the disk can be moved into any block of the cache.
    \item {\it ideal}: the cache ``knows'' which blocks of the disk will be accessed in the future, and thus always evicts blocks from the cache optimally.
    \item {\it tall}: $M = \Omega(B^2)$.
\end{enumerate}
The running time of an algorithm is then measured by the number of {\it memory transfers} that it must perform, with the idea that, in practice, transferring blocks of memory is significantly slower than actually running an algorithm that counting memory transfers is a reasonable metric. Some justification of the assumptions is necessary---in practice, caches do not know the future; instead, a heuristic is used. Two common heuristics used are the LRU (Least Recently Used) and FIFO (First-In First-Out) heuristics: in the former, the least-recently accessed block is evicted when the cache is full; in the latter, the least-recently added block is evicted. Further, practical caches are not fully associative. In the other extreme, a {\it 1-way associative cache} is one in which each block of the disk can only appear in one location in the cache. At first glance, neither of these seem close to the ideal-cache model. However, \cite{demaine2002cache} gives the following results:

\begin{theorem}[\cite{demaine2002cache}, Corollary 1] Suppose that the number of memory transfers depends only polynomially on the size $M$ of the cache; in particular, that halving $M$ will only increase the complexity by a constant factor. Then an LRU or FIFO cache only uses a constant factor more memory transfers than an ideal cache.
\end{theorem}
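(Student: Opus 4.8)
The plan is to reduce the claim to a classical competitive-analysis fact about online paging and then feed in the regularity hypothesis. First I would recast the setting in paging terms: a run of the algorithm produces a sequence $\sigma$ of requests to disk blocks, and a memory transfer occurs exactly when a requested block is not resident in the cache. Write $F_A(M,\sigma)$ for the number of such faults incurred by a cache-management policy $A$ whose cache holds $M/B$ blocks; the ideal cache attains $F_{\mathrm{OPT}}(M,\sigma) = \min_A F_A(M,\sigma)$, the offline optimum, so its complexity is $T(M) = \sup_\sigma F_{\mathrm{OPT}}(M,\sigma)$, the supremum taken over the request sequences the algorithm can generate on inputs of a given size.

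The heart of the matter is the Sleator--Tarjan bound: for every $\sigma$ and every $M' \le M$,
\[
  F_{\mathrm{LRU}}(M,\sigma) \;\le\; \frac{M/B}{\,M/B - M'/B + 1\,}\, F_{\mathrm{OPT}}(M',\sigma) \;+\; \frac{M'}{B},
\]
and the identical inequality holds with $\mathrm{FIFO}$ in place of $\mathrm{LRU}$. I would prove this by the standard phase decomposition: split $\sigma$ into consecutive segments $P_1, P_2, \dots$, each a maximal stretch on which $\mathrm{LRU}$ faults exactly $k := M/B$ times. Inside one segment the $k$ blocks faulted on, together with the block requested just before the segment, are pairwise distinct --- otherwise $\mathrm{LRU}$ would have evicted and then re-requested some block within a window touching at most $k$ distinct blocks, impossible for a least-recently-used cache of $k$ blocks. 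Consequently each segment (amortized against the request preceding it) forces at least $k - M'/B + 1$ faults on any policy restricted to $M'/B$ blocks; summing over segments yields the displayed bound, with the additive $M'/B$ mopping up boundary effects at the first and last segments. The $\mathrm{FIFO}$ case uses the same decomposition and an analogous distinctness statement.

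Finally I would set $M' = M/2$, so $\frac{M/B}{M/B - M'/B + 1} \le 2$ and
\[
  F_{\mathrm{LRU}}(M,\sigma) \;\le\; 2\, F_{\mathrm{OPT}}(M/2,\sigma) + \frac{M}{2B} \;\le\; 2\, T(M/2) + \frac{M}{2B}.
\]
Here the hypothesis is used exactly once: ``halving $M$ increases the complexity by at most a constant factor'' is the statement $T(M/2) = O(T(M))$, whence the right-hand side is $O(T(M)) + M/(2B)$. The residual $M/(2B)$ is just the number of blocks in the cache and is dominated by the transfer cost of any algorithm whose input exceeds the cache size (in particular under the tall-cache assumption $M = \Omega(B^2)$), so it is absorbed into $O(T(M))$; the same holds for $\mathrm{FIFO}$. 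Thus $\mathrm{LRU}$ and $\mathrm{FIFO}$ each lose only a constant factor against the ideal cache.

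The step I expect to be the real obstacle is the phase-decomposition lemma: getting the distinctness argument and the off-by-one accounting at segment boundaries exactly right, and --- crucially --- comparing against $\mathrm{OPT}$ with the \emph{smaller} cache $M/2$ rather than with $M$. That slack in cache size is what makes a constant competitive ratio attainable at all (no deterministic online policy is constant-competitive against $\mathrm{OPT}$ at equal cache size), and it is precisely what the polynomial/regularity hypothesis is there to pay for. A secondary chore is the quantifier bookkeeping needed to lift the per-sequence inequality to the worst-case complexity $T(M)$.
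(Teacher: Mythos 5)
The paper does not prove this statement at all---it is imported verbatim from \cite{demaine2002cache} (itself following \cite{frigo1999cache}), so there is no in-paper proof to compare against. Your argument is exactly the one used in that cited source: the Sleator--Tarjan phase decomposition giving $(M/B)/(M/B-M'/B+1)$-competitiveness of LRU (and FIFO) against an optimal replacement policy with a smaller cache, specialized to $M'=M/2$, with the regularity hypothesis $T(M/2)=O(T(M))$ absorbing the factor of $2$ and the additive $M/(2B)$ term handled as a lower-order boundary cost. The proposal is correct, and your closing remark correctly identifies the two load-bearing points: the resource-augmentation slack (comparing to OPT at cache size $M/2$) and the regularity condition that pays for it.
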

\begin{theorem}[\cite{demaine2002cache}, Lemma 2]
For some $\alpha > 0$, a fully associative LRU cache of size $\alpha M$ and block size $B$ can be simulated in $M$ space with $O(1)$ expected memory access time to any block.
\end{theorem}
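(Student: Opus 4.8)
The plan is to realize the fully-associative cache as a \emph{dictionary} from disk-block addresses to cache slots, and to show that a chained hash table does the job inside the space budget. A fully-associative LRU cache holding $k := \alpha M / B$ blocks must support, on a block address $X$: test membership, insert $X$ (bringing its $B$ words in from disk, evicting the least-recently-used resident block if the cache is full), and mark a resident $X$ as most-recently-used. If each of these costs $O(1)$ expected bookkeeping work and the structure fits in $M$ words, we are done: the simulator performs one membership test per access, and on a miss one eviction followed by one insertion, incurring exactly the one block transfer a true fully-associative cache would, plus $O(1)$ expected extra work.

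For the data structures I would use (i) a bucket array of $m = \Theta(M/B)$ head pointers; (ii) for each of the $k$ resident blocks an \emph{entry} holding its disk address (the key), the $B$ data words, a chain pointer, and pointers \texttt{prev}/\texttt{next}; and (iii) a doubly-linked list threading all resident entries by recency, head $=$ most recent. The hash function $h$ is drawn once from a universal family, e.g.\ $h(x) = \big((ax+b)\bmod p\big)\bmod m$, described in $O(1)$ words and evaluated in $O(1)$ time on a word RAM. A lookup computes $h(X)$ and walks that chain; a hit additionally splices the found entry to the list head; a miss splices the tail entry out of its chain and the list, reuses the freed slot for $X$, and splices the new entry into bucket $h(X)$ and onto the list head. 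List splices are $O(1)$ worst-case, and a chain walk costs $O(1+\ell)$ for chain length $\ell$; since $h$ is universal and the load factor $k/m$ is $O(1)$, $\E[\ell] = O(1)$, so every operation is $O(1)$ expected, the expectation taken over the choice of $h$ and independent of the (fixed) access sequence. LRU correctness is immediate, since the list holds the exact recency order.

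It remains to make the space arithmetic fit, which is the one genuinely delicate point. The $k$ entries use $k(B + O(1)) = \alpha M + O(\alpha M/B)$ words; the bucket array uses $\Theta(M/B)$ words; $h$ and a few counters use $O(1)$ words. With $B \ge 1$ --- and the tall-cache hypothesis $M = \Omega(B^2)$ if one wants the $\Theta(M/B)$ term to be comfortably lower order --- the total is $\alpha M + O(M/B) \le M$ once $\alpha$ is a sufficiently small positive constant. For that $\alpha$ the $\alpha M$-sized fully-associative LRU cache lives in $M$ words with $O(1)$ expected time per block access, as claimed.

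The main obstacle I anticipate is exactly this balancing act: the chain walk is $O(1)$ expected only while the load factor $k/m$ stays bounded, which caps $k$ against $m$, yet $m$ buckets together with $k$ entries of $B + O(1)$ words must still fit in $M$ words --- shrinking $\alpha$ resolves both, but one must check that the per-entry metadata (an address and three pointers) really is $O(1)$ machine words and that ``expected'' here is per access over the random $h$, not merely amortized. Everything else is routine.
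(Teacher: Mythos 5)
Your proposal is correct: the chained hash table with a universal hash family, an $O(1)$ load factor, and a doubly-linked recency list, with $\alpha$ shrunk until the $\alpha M$ of data plus the $O(M/B)$ of buckets and per-entry metadata fit in $M$ words, is essentially the standard argument behind this lemma. The paper itself offers no proof (it simply cites \cite{demaine2002cache}, where the result is established by exactly this kind of 2-universal hashing simulation of a fully-associative LRU cache), so your write-up matches the intended proof rather than diverging from it.
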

In particular, such a cache would fit completely into our ideal cache of size $M$ and be 1-way associative (since we have $O(1)$ expected memory access); thus, assumption (1) is reasonable.

\cite{frigo1999cache} also show that an algorithm that is optimal in this two-level ideal cache model is also optimal in a memory hierarchy with more than two levels of caches. For simplicity, we will omit the formalism and refer readers to the original paper for the details of this result. 

The results above, combined, show the power of the ideal-cache model: an algorithm that is optimal in the ideal-cache model is also optimal (to within a constant factor) in any LRU memory hierarchy.

\section{Cache-Oblivious Model and Algorithms}

The cache-oblivious model was also first introduced by \cite{frigo1999cache}. The idea behind the model is simple: assume we do not know the block size $B$ or cache size $M$ of our computer and want to extend our algorithms to work on arbitrary architectures. We perform analyses on our algorithms using the above ideal-cache model and an unspecified block size $B$, which will become a variable in our memory transfer equations. The overall goal of designing cache-oblivious algorithms is to match the time bound of an algorithm designed with the block size $B$ in mind; for instance, a B-tree with nodes limited to size $B - 1$ and $2B - 1$, a constant factor times the size of $B$, results in ideal bounds on memory transfers to complete relevant operations. We will go deeper into this analysis with examples later in this section as well as in the following sections.

We will begin by exploring the process of scanning an array in Section \ref{subsec:arraytraversal} followed by an attempt at binary search with sub-optimal memory transfer bounds in Section \ref{subsec:failedbs}. This section will give a point of reference for analyses later in the paper.

\subsection{Cache-Oblivious Analysis of Array Traversal} \label{subsec:arraytraversal}

While this is a simple problem with a simple solution which performs optimally regardless of the cache size, the analysis of this algorithm is useful. A traversal of a contiguous block of memory, usually expressed as an array, is an example of a cache-oblivious algorithm; it performs optimally regardless of the cache size and structure. We will now prove bounds on the memory usage of this simple case.

\begin{theorem}[\cite{demaine2002cache}, Theorem 1]
Scanning $N$ elements stored in a contiguous segment of memory costs at most $\lceil N/B \rceil + 1$ memory transfers.
\end{theorem}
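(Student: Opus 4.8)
The plan is to reduce the claim to a purely combinatorial statement about how many blocks a contiguous array can overlap, and then to observe that a single left-to-right scan pays for each such block at most once.

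First I would note that the $N$ elements occupy a contiguous interval of memory addresses, and the disk is partitioned into blocks of $B$ consecutive addresses; hence the set of blocks the array overlaps is itself an interval of consecutive blocks. If the array began exactly at a block boundary it would touch exactly $\lceil N/B \rceil$ blocks. In general its starting address sits at some offset $s$ with $0 \le s < B$ inside its first block, so it touches $\lceil (N+s)/B \rceil$ blocks. I would then verify the elementary inequality $\lceil (N+s)/B \rceil \le \lceil N/B \rceil + 1$ for every $0 \le s < B$ (write $N = qB + r$ with $0 \le r < B$ and case on $r$), which bounds the number of distinct blocks the array can occupy.

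Next I would argue that each of these blocks is brought into the cache at most once over the course of the scan. Since the scan touches addresses in increasing order, the set of times at which it accesses addresses lying in a fixed block forms a single contiguous interval; once the scan passes the end of a block it never returns to it. Thus the schedule that loads each block exactly once, immediately before the scan first enters it, uses at most one transfer per block; and an ideal cache, which evicts optimally, does no worse than this explicit schedule. Summing over the at most $\lceil N/B \rceil + 1$ blocks gives the bound. (One should also note the cache holds at least one block, which is immediate from the tall-cache assumption $M = \Omega(B^2)$.)

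I do not expect a real obstacle here; the only point requiring care is the additive ``$+1$'', which is precisely the price of not knowing the array's alignment relative to the (unknown) block structure — this is where cache-obliviousness costs us an additive constant rather than nothing. A secondary remark worth stating is that the argument does not actually use the full strength of the ideal-cache assumption: any eviction policy that does not gratuitously reload a block achieves the same bound, since no block is ever requested twice during the scan.
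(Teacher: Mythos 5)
Your proposal is correct and follows essentially the same route as the paper's proof: count the consecutive blocks the contiguous array overlaps, note that misalignment with the (unknown) block boundaries costs at most one extra block, and observe that a monotone scan loads each block at most once. Your version simply makes explicit (via the offset $s$ and the inequality $\lceil (N+s)/B \rceil \le \lceil N/B \rceil + 1$) what the paper dismisses as the ``overflow'' term, which is a welcome bit of added rigor but not a different argument.
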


The proof of this theorem is obvious; to read all the elements in the array, we must scan each element, which are held in $N$ separate memory positions. However, since these memory positions are adjacent in memory, we can access blocks of $B$ elements with only one memory transfer, meaning that we only need $\lceil N/B \rceil + 1$ memory transfers; the 1 comes from overflow of the end of the array onto the next block.

Even with this trivial example, we see the power of designing cache-oblivious algorithms: with the same data structure or algorithm, we can produce an algorithm that uses the asymptotically-minimal number of memory transfers to accomplish a task without tuning parameters and regardless of the block size $B$.

\subsection{Cache-Oblivious Analysis of Binary Search} \label{subsec:failedbs}

When we perform a standard binary search on a sorted array, we can find elements in the optimal $O(\log n)$ runtime. However, performing optimally with the number of comparisons or computer instructions does not necessarily guarantee optimal memory usage.

\begin{theorem}[\cite{demaine2002cache}, Theorem 3]
Binary search on a sorted array incurs $O(\log N - \log B)$ memory transfers.
\end{theorem}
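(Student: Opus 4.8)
The plan is to track the size of the \emph{active search interval} and to argue that, once that interval fits inside a single block, essentially no further memory transfers are needed. Recall that one step of binary search on a sorted array inspects the middle element of the current interval and then restricts attention to one of the two halves; starting from $N$ elements, after $t$ steps the interval contains at most $\lceil N/2^t\rceil$ elements.

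First I would split the execution into two phases. In the first phase the active interval has more than $B$ elements; each step probes a single array location, which in the worst case costs one memory transfer (the block containing that location need not currently be cached, and we do not even need to appeal to ideal eviction here). The number of such steps is the smallest $t$ with $\lceil N/2^t\rceil \le B$, which is $\lceil \log_2(N/B)\rceil = \log_2 N - \log_2 B + O(1)$. Hence the first phase costs $O(\log N - \log B)$ transfers.

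For the second phase, once the interval consists of at most $B$ consecutive array elements, those elements lie in the union of at most two consecutive blocks (the interval either straddles one block boundary or it does not). Every remaining probe therefore lands in one of those two blocks, so after fetching them --- at most $2$ additional memory transfers --- the remainder of the search incurs none. Combining the phases gives $O(\log N - \log B) + O(1) = O(\log N - \log B)$, as claimed.

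The one place I would be most careful is the off-by-a-constant bookkeeping: confirming that the ceiling in $\lceil N/2^t\rceil$ and the ``spans at most two blocks'' claim are both discharged honestly, and observing that when $B \ge N$ the statement is anyway trivial, since the whole array occupies $O(1)$ blocks and the search costs $O(1)$ transfers. I expect no genuine obstacle beyond this; the argument is essentially a one-paragraph potential/phase argument. It is worth remarking (although the theorem only asks for the upper bound) that a matching $\Omega(\log N - \log B)$ lower bound is known, so this bound is tight and motivates the search for a better \emph{layout} in later sections.
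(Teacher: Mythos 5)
Your proposal is correct and follows essentially the same argument as the paper: charge one transfer per probe while the active interval exceeds $B$ elements (about $\log N - \log B$ steps), then observe that the remaining probes fall within $O(1)$ blocks. Your handling of the interval possibly straddling two blocks is in fact slightly more careful than the paper's phrasing, which speaks of ``that same block.''
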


The proof of this theorem is again easy to see; imagine splitting the sorted array into blocks of size $B$. We access $\log N$ elements during our search, with the distance between element accesses within the array halving each time. For the final $\log B$ elements, all accesses in the search will be made within that same block. Thus we need $O(\log N - \log B)$ memory transfers to complete a binary search of a sorted array.

As we will see later in the paper, the lower bound of finding an element in a sorted set is $O(\log_B N)$, meaning that the naive binary search algorithm is suboptimal in its total number of memory transfers. We will improve on the naive binary search asymptotic memory transfer cost later in this paper.

\section{Cache-Oblivious Search Trees}
In this section, we will present a design of a cache-oblivious search tree, as given by \cite{prokop1999cache} and \cite{brodal2002cache}.
\subsection{Static Search Trees: The van Emde Boas Layout}
We begin by supposing that we had access to all the elements that we wish to put in our binary search tree. In this section, we will present a binary search tree layout that achieves an optimal average-case number of memory transfers. This structure was first given by \cite{prokop1999cache} as part of his master's thesis, but he apparently considered the proofs so trivial that they were omitted.  We thus present the proofs as given by \cite{demaine2002cache}. 
\begin{theorem}[\cite{demaine2002cache}, Theorem 7]
In a search tree of size $N$, starting from an empty cache, $\Omega(\log_B N)$ memory transfers are required to search for an element.
\end{theorem}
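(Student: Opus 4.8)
The plan is to prove the bound by an adversary argument of the classical information-theoretic type, adapted to the block structure of memory. Fix \emph{any} placement of the $N$-node search tree in memory; the adversary controls only the query key $q$, which it reveals gradually while answering the algorithm's memory accesses. At every point the adversary maintains a set $A$ of keys — initially all $N$ of them — such that the execution seen so far is consistent with $q$ being an arbitrary element of $A$. A correct algorithm cannot halt while $|A| \ge 2$, since two distinct present keys require two distinct outputs yet would be indistinguishable to it. So it is enough to show that a single memory transfer can shrink $|A|$ by at most a factor of about $B+1$.

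First I would isolate the combinatorial core. One memory transfer loads a single block of $B$ contiguous cells; each cell stores at most one key, so the transfer reveals the values of at most $B$ of the tree's keys (any pointers or metadata in the block are independent of $q$ and hence uninformative to the adversary). If $q$ coincides with one of these revealed keys the search is essentially finished, so assume it does not; then the revealed keys cut the current candidate set $A$ into at most $B+1$ intervals. The adversary answers every comparison it has thereby committed to by declaring that $q$ lies in whichever interval contains the most elements of $A$, and replaces $A$ by that interval, which retains at least $|A|/(B+1)$ keys. (Starting from an \emph{empty} cache is exactly what prevents the algorithm from obtaining keys ``for free''; the tall-cache hypothesis is not needed for this direction.)

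The conclusion is then a one-line calculation: after $t$ transfers at least $N/(B+1)^t$ candidates remain, so the algorithm cannot yet know which key $q$ is — much less have read the cell physically holding $q$, which may cost one more transfer — unless $N/(B+1)^t < 2$, i.e.\ $t \ge \log_{B+1} N - 1 = \Omega(\log_B N)$. The query that the adversary finally settles on along the worst-case branch is then an explicit element of the tree whose search costs $\Omega(\log_B N)$ memory transfers.

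The step I expect to be the real obstacle is not the counting but nailing down the computational model so that the adversary argument is legitimate: one must rule out the possibility that the algorithm uses its knowledge of $q$ to compute the address of the cell containing $q$ and jump straight there in a single transfer. The standard way around this is the convention that a search in a comparison-based ordered dictionary begins at a fixed entry point and chooses its next access solely on the basis of comparisons between $q$ and keys already retrieved (equivalently, it performs only $O(1)$ work between transfers and holds no precomputed location for $q$); under that convention $A$ really does evolve as described. A minor additional point is to handle unsuccessful searches, but there the number of distinguishable answers is $N+1$, so the identical argument yields the same bound.
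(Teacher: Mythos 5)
Your proposal is correct and is essentially the paper's own argument: the paper counts the $\Theta(\log N)$ bits needed to locate the query against the $\Theta(\log B)$ bits revealed per block, and your adversary that keeps the largest of the at most $B+1$ intervals is exactly a formalization of that per-transfer information bound (with the useful extra care of making the comparison-based model explicit, which the paper leaves implicit). The only nit is that a transfer may also eliminate the at most $B$ revealed keys themselves, so the surviving candidate set has size at least $(|A|-B)/(B+1)$ rather than $|A|/(B+1)$; this changes nothing asymptotically.
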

\begin{proof}
The sorted position of one of $N$ elements contains $\Theta(\log N)$ bits of information. A single block of size $B$ can be used to recover only $\Theta(\log B)$ bits of information; in particular, the sorted location of our desired element among the $\Theta(B)$ elements in the block. Thus, to recover the $\Theta(\log N)$ bits of information required to determine a given element's location in the sorted order, we require $\Omega(\log N / \log B) = \Omega(\log_B N)$ memory transfers.
\end{proof}

Assume WLOG that $N$ is a power of 2; if not, pad the list of elements up to a power of 2. This incurs at most constant-factor loss. The {\it van Emde Boas} (vEB) layout\footnote{As far as we are aware, the van Emde Boas layout was not discovered by van Emde Boas. The name, to our knowledge, comes because of its similarity to the {\it van Emde Boas tree}, which looks similar but has nothing to do with cache locality.} for a complete binary search on $N = 2^H$ nodes is defined recursively as follows.
\begin{enumerate}[(1)]
\item If $H=1$ then there is only one node. Put it in memory. 
\item If $H > 1$ then let $2^\ell$ be the largest power of 2 strictly less than $H$, and let $m = H - 2^\ell$. Lay out the top $m$ layers of the tree in the van Emde Boas layout, followed, in order, by the $2^m$ subtrees of the top $m$ layers (each of which has $2^{2^\ell} - 1$ nodes), also in vEB layout.
\end{enumerate}
A visualization of the vEB layout is given in \cite{demaine2002cache} on page 16. Notice that, not counting the root node, a vEB layout stores a ``tree block'' of size $B$ (i.e. induced subtrees of a subset of nodes) contiguously in memory, except at the root. 
\begin{theorem}[\cite{demaine2002cache}, Theorem 8]
Searching in a tree in vEB layout takes $O(\log_B N)$ memory transfers.
\end{theorem}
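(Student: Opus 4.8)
The plan is to exploit the recursive structure of the vEB layout together with the tall-cache assumption. The key idea is a "level of detail" argument: consider the recursive decomposition of the tree, and stop the recursion at the level where the recursive subtrees have size between roughly $\sqrt{B}$ and $B$. Concretely, following \cite{demaine2002cache}, I would look at the unique level of recursive refinement at which each recursive subtree (a contiguous "tree block") has height between $\tfrac{1}{2}\log B$ and $\log B$, hence contains between $\Theta(\sqrt{B})$ and $\Theta(B)$ nodes. Since the layout places each such recursive subtree contiguously in memory, and each occupies at most $O(B)$ space, a single such block spans at most $2$ memory blocks of size $B$ (and by the tall-cache assumption $M = \Omega(B^2)$, there is room in cache for the $O(1)$ blocks we touch along a root-to-leaf path at once, though for the search bound we only need the per-block cost).

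The main steps would be: (1) Fix the "critical" recursion level and argue that every recursive subtree at that level has height in $[\tfrac{1}{2}\log B, \log B]$ — this follows because at each recursive step the height is (roughly) halved, so consecutive levels of the recursion give heights that differ by at most a factor of $2$, and the interval $[\tfrac12\log B, \log B]$ is wide enough to catch one. (2) Observe that a root-to-leaf search path in the full tree of height $H = \log N$ passes through at most $\lceil H / (\tfrac12 \log B)\rceil = O(\log N / \log B) = O(\log_B N)$ of these critical-level tree blocks. (3) Observe that reading all the nodes the search visits within a single critical-level tree block costs $O(1)$ memory transfers, since that block is stored contiguously in $O(B)$ consecutive memory cells and therefore straddles at most $2$ cache blocks of size $B$. (4) Multiply: the total cost is $O(1) \cdot O(\log_B N) = O(\log_B N)$.

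A couple of technical points need care. First, the root node of a vEB-laid-out tree of height $H>1$ is handled specially — the decomposition at the top produces a "top recursive subtree" that is itself laid out contiguously, so the remark in the excerpt that the tree block is contiguous "except at the root" means I should track that $O(1)$ extra transfer, which does not affect the asymptotics. Second, one must be slightly careful that $B$ may not be a power of $2$ and $H$ may not align with a recursion level; this is handled by the usual flooring/ceiling and the observation that the recursive heights, while not exactly halving, satisfy $H \mapsto H - 2^\ell$ with $2^\ell$ the largest power of two strictly less than $H$, so the two pieces have heights that are each at least $H/2$-ish — enough to make the "one critical level exists" claim go through with adjusted constants.

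The main obstacle I anticipate is making step (1) fully rigorous: the recursion on $H$ is $H \mapsto (m,\,2^\ell)$ with $m = H - 2^\ell$ and $2^\ell$ the largest power of $2$ strictly below $H$, and the two resulting heights are not simply $H/2$. I would need to verify that, descending the recursion, the subtree heights decrease in such a way that they cannot "jump over" the target window $[\tfrac12\log B,\ \log B]$ — i.e., that when a subtree has height $> \log B$, its children in the recursion have height $> \tfrac12 \log B$. This amounts to checking that halving-in-the-worst-sense is genuine, which is true because if $H > \log B \ge 1$ then the smaller recursive piece has height $\ge \lceil H/2 \rceil \ge \tfrac12 \log B$; once this inequality is pinned down, the rest is bookkeeping.
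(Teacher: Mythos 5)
Your high-level strategy is the same as the paper's: the paper's own (two-sentence) proof just asserts that one memory block stores a recursive subtree spanning $\Theta(\log B)$ levels, so a root-to-leaf search descends $O(\log N/\log B)$ blocks; your ``critical level of detail'' argument is the standard fleshing-out of exactly that claim, as in \cite{demaine2002cache}. However, the technical point you yourself flag as the main obstacle is resolved incorrectly for the layout \emph{as this paper defines it}. The recursion here is not the even split into heights $\lceil H/2\rceil$ and $\lfloor H/2\rfloor$: the bottom subtrees get height $2^\ell$, the largest power of two strictly less than $H$, while the top piece gets height $m = H - 2^\ell$, which can equal $1$ no matter how large $H$ is (take $H = 2^k+1$). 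So your claim that ``if $H > \log B$ then the smaller recursive piece has height $\ge \lceil H/2\rceil \ge \tfrac12\log B$'' is false, and with it step (1) as stated: at the level of detail that straddles $B$ there can be recursive subtrees of height far below $\tfrac12\log B$, so you cannot conclude that a root-to-leaf path meets only $O(\log N/\log B)$ of them merely because the heights of the pieces it meets sum to $\log N$.

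The gap is repairable in either of two ways. If you analyze the even-split vEB layout (the one used for this theorem in \cite{demaine2002cache}), your halving claim is true and the proof closes as written. For the split rule used in this paper, note instead that every bottom subtree has power-of-two height $2^\ell \ge H/2$, and a subtree of power-of-two height splits exactly in half forever after; hence every stopped piece of height less than $\tfrac12\log B$ is a ``top'' fragment, and the very next stopped piece the search path enters lies inside a bottom subtree and has height greater than $\tfrac12\log B$. Short and long pieces therefore alternate at worst along the path; the long ones number $O(\log N/\log B)$ because their heights sum to at most $\log N$, so the total number of stopped pieces is still $O(\log_B N)$, each costing at most two transfers by contiguity. (Equivalently, the recurrence $P(H) \le P(H-2^\ell) + P(2^\ell)$ with $P(2^j) = O(2^j/\log B)$ telescopes geometrically.) Your remaining points are fine: the tall-cache assumption is not needed for this bound, and the special handling of the root costs only $O(1)$ extra transfers.
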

\begin{proof}
After the root, reading one block of memory will allow one to descend through $\Theta(\log B)$ layers of the complete binary search tree, since a memory block of size $B$ will store a subtree of size $\Theta(\log B)$ in the layout. Thus, descending through $O(\log N)$ layers takes $O(\log N/\log B) = O(\log_B N)$.
\end{proof}

It is also useful to note that a vEB layout enables fast inorder traversal of nodes. In particular, we can state the following result, which \cite{brodal2002cache} does not even bother giving a number, and instead simply writes in the body of the text:
\begin{theorem}[\cite{brodal2002cache}, with a simplifying assumption]\label{thm:traversal}
Assuming that $M > H$, an inorder traversal starting at any node $v$ in a vEB layout takes time $O(\log_B N + S(v) / B)$, where $S(v)$ is the number of elements in the subtree at $v$.
\end{theorem}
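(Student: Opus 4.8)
The plan is to charge the traversal in two parts: the descent from the root down to $v$, and the reporting of the $S(v)$ nodes of the subtree $U$ rooted at $v$. The descent is immediate: the root-to-$v$ path has at most $H$ edges, and---exactly as in the proof of Theorem 8 of \cite{demaine2002cache}---one memory block suffices to descend $\Theta(\log B)$ consecutive layers, so the descent costs $O(\log_B N)$ transfers. The work is to show that the inorder traversal of $U$ costs $O(\log_B N + S(v)/B)$; I will argue that $U$ occupies few blocks of memory and that the ideal cache loads each of them $O(1)$ times.

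For the structural part, fix the coarsest level of the vEB recursion at which the recursive pieces have fewer than $B$ nodes; call these pieces \emph{$L$-trees}. Such a level exists because consecutive recursion levels are related roughly by a square root of the node count (each roughly halves the piece height), and if even the whole tree has at most $B$ nodes the theorem is trivial. Each $L$-tree is stored contiguously in at most two blocks, and within any coarser recursive piece the sibling sub-pieces are stored consecutively in memory. Descending from the root, $v$ falls inside a single $L$-tree $X_v$, whose intersection with $U$ occupies at most two blocks; the rest of $U$ is the union of the $L$-trees whose topmost node is a strict descendant of $v$. An induction on the recursion shows these form $O(\log\log N)$ maximal contiguous runs of memory: at each recursion level, $v$'s subtree within the current piece either lies inside one sub-piece, in which case we recurse, or splits into one contiguous run of whole sub-pieces plus a recursive remainder; the recursion has depth $O(\log\log N)$, so at most that many runs are emitted. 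As the runs have total size $S(v)$, they span $O(S(v)/B + \log\log N)$ blocks.

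The crux is the caching, because inside a run the inorder order does not match the memory order, so a careless traversal on a small cache could reload $L$-trees over and over and spend $\Theta(S(v)/\sqrt B)$ or worse. The key is that the access pattern of the inorder traversal is a near-monotone left-to-right sweep across those runs, interrupted only by brief ``excursions'' that revisit the top piece of each enclosing recursive level. There are only $O(\log\log N)$ such levels, and each needs only a single resident block (the location of the next node of its top piece the sweep will return to); holding this $O(\log\log N)$-block working set resident fits in cache---the tall-cache assumption together with $M > H$ gives $M = \Omega(\max(B^2, \log N)) = \Omega(B\log\log N)$---so the ideal cache loads each of the $O(S(v)/B + \log\log N)$ blocks only $O(1)$ times. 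Adding the $O(\log_B N)$ descent gives $O(\log_B N + S(v)/B)$ (up to the $\log\log N$ term, which \cite{brodal2002cache} do not track).

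I expect the final step to be the main obstacle: converting ``near-monotone sweep plus $O(\log\log N)$ brief excursions'' into a precise eviction schedule, and in particular checking that a paused sub-traversal can be resumed from only $O(1)$ resident blocks, so that the per-level bookkeeping telescopes to $O(S(v)/B)$ rather than accumulating a $\sqrt B$ factor. The cleanest way to make this airtight is to exhibit an explicit caching policy that meets the bound under the stated hypothesis on $M$ and then appeal to the optimality of the ideal cache; \cite{brodal2002cache} simply assert the bound in passing, which is presumably why it is stated here only ``with a simplifying assumption''.
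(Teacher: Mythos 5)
Your structural analysis (decomposing $v$'s subtree into $O(\log\log N)$ contiguous runs of recursive pieces of size at most $B$) is sound, and your instinct to exhibit an explicit caching policy and then invoke optimality of the ideal cache is the right shape of argument. But the step you yourself flag as ``the main obstacle'' is a genuine gap, not a formality. With only one resident block per recursion level, each time the sweep returns from an excursion into a bottom subtree, the next pending node of a paused top piece may lie in a block that has since been evicted; since the nodes of a top piece are interleaved one at a time between the bottom-subtree traversals, a single $L$-tree of a top piece can be re-entered up to $\Theta(B)$ times, and nothing in your argument bounds the total number of reloads this causes. Turning ``brief excursions'' into an $O(1)$-loads-per-block guarantee requires a charging argument (reloads paid for by the nodes traversed during the intervening excursions) that you have not supplied, so as written the bound is not established.

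The simplifying assumption $M > H$ exists precisely so that none of this care is needed, and the paper's proof uses it directly: keep resident \emph{every} block that meets the current root-to-$u$ path. That is at most $H$ blocks, which the assumption lets the cache hold; the traversal steps that touch a given block form a single contiguous interval of the inorder walk, so under this policy a block is evicted only when it will never be needed again and hence is loaded exactly once; and since a block stores $\Theta(\log B)$ consecutive layers contiguously, the distinct blocks touched number $O(\log_B N + S(v)/B)$ --- essentially your own runs computation. Your more frugal $O(\log\log N)$-block working set would, if the resumption step were completed, prove the statement under a weaker cache assumption ($M = \Omega(B\log\log N)$ rather than room for $H$ blocks), modulo the extra additive $\log\log N$ term you note; but as it stands, the simpler path-caching policy is what closes the proof, and your proposal stops exactly where the real work under your policy begins.
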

\begin{proof}
A block of size $B$, again, stores $\Theta(\log B)$ layers of the search tree. Suppose that our inorder traversal is currently at some node $u$. Suppose we always use the cache to store the blocks that include the current path from the root down to $u$. Adding these blocks to the cache at the beginning of the traversal takes time $O(\log_B N)$. Note that this takes at most $H$ blocks of memory, and guarantees that a block that is evicted from the cache will not ever be needed again, since we're doing an in-order traversal. Further, during the traversal, a subtree of size $B$ nodes will be accessed contiguously; therefore, we will only need to load new blocks into memory every $\Theta(B)$ nodes. Adding everything up gives the desired bound.
\end{proof}
The assumption $M>H$ only breaks down for trees with $2^M$ nodes, which won't even fit into the disk of any reasonable machine, so we consider it reasonable enough.

\subsection{Dynamic Search Trees}\label{s:main-ds}

The difficulty of using the above vEB layout to make a dynamic search tree is that we cannot rebalance the tree in the same way we would rebalance, say, a red-black or splay tree: a rotation would cause entire subtrees to be moved around, and in the vEB layout, this would require physically moving the elements instead of merely rewiring pointers. In this section, we present the dynamic cache-oblivious search tree of \cite{brodal2002cache}. This search tree does not quite give the ideal runtime of $O(\log_B N)$ memory transfers on all operations, but it comes somewhat close: searches take $O(\log_B N)$, and insertions take amortized $O(\log_B N + (\log N)^2 / B)$ memory transfers. Although this is not information-theoretically ideal, this is, to our knowledge, the best-known result for a BST that has the ideal $O(N/B)$ memory transfers for a full in-order traversal. (There are other, more complex data structures that have $O(\log_B N)$ insertions and deletions, but lose this $O(N/B)$ traversal runtime). For simplicity and brevity, we neglect deletions; they can be managed by a similar rebalancing technique that is given in \cite{brodal2002cache}.

The data structure works by storing all the elements in a complete vEB layout with $2^H > N$ nodes, some of which may be empty, and periodically rebalancing subtrees. Pick a {\it density parameter} $\tau_1 \in [1/2, 1)$, and for $i = 2, \dots, H$ define $\tau_i = \tau_{i-1} + \Delta$ where $\Delta = (1 - \tau_1)/(H-1)$, so that the $\tau_i$ are evenly spaced in the interval $[\tau_1, 1]$. Let $N(v)$ be the number of nodes in the subtree rooted at $v$, $d(v)$ be the depth of node $v$, and $S(v) = 2^{H - d(v)+1}-1$ denote the size of the complete subtree rooted at $v$, {\it including} layers of the tree that may not yet be occupied by any nodes.

Insertion works as follows: compute the location $u$ at which the new key should be inserted. If $u$ is within the tree (i.e. at depth at most $H$), then place $u$ in the correct spot. Otherwise, let $v$ be the nearest ancestor of $u$ such that $N(v) < \tau_{d(v)}S(v)$, or the root of the tree if no such ancestor $v$ exists. This is essentially the closest ancestor to $u$ that has a smaller percentage of the nodes below it filled than its height $\tau_{d(v)}$ allows.

We will make space for $u$ by {\it complete rebalancing}: perform an inorder traversal of the subtree rooted at $v$, storing the elements in a separate array, and then re-inserting them, together with $u$, as compactly as possible into the subtree $v$ (i.e. taking $\log_2 N(v) + O(1)$ layers), using another inorder traversal. Notice that since $N(v) < S(v)$, there must now be space for the additional node $u$ to be added in while the elements are being inserted.

If we ever find that $N(\text{root}) \ge \tau_1 S(\text{root})$ before an insertion, then we perform a complete rebalance of the root node itself, incrementing $H$ (and re-allocating the array) between the two traversals. We now claim that this insertion method does indeed achieve the desired time bound. We will give a slightly simplified version of the analysis in \cite{brodal2002cache}. We start with a simple lemma.
\begin{lemma}[\cite{brodal2002cache}, Lemma 3.1, modified slightly]\label{lem:rebalance}
Immediately after a rebalancing of any node $v$, for any descendant $w$ of $v$, we have $N(w) < N(v) (1+ S(w))/S(v)$.
\end{lemma}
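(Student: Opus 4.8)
The plan is to prove the inequality by induction on the depth of $w$ below $v$, i.e. on $d(w)-d(v)\ge 0$, exploiting the fact that complete rebalancing distributes the elements as evenly as possible. The first ingredient I would pin down is what ``as compactly as possible, via an inorder traversal'' does at a single node: in the rebuilt subtree every node holds exactly one element, and the remaining $N(w)-1$ elements strictly below a node $w$ are split as evenly as the inorder ordering permits between its two subtrees, so that each child $w'$ of $w$ receives at most $\lceil (N(w)-1)/2\rceil \le N(w)/2$ of them. It is precisely this even split --- not merely the weaker statement that the occupied part of the tree has only $\log_2 N(v)+O(1)$ layers --- that the lemma needs. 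The second, purely structural, ingredient is that since $S(x)=2^{H-d(x)+1}-1$, a child $w'$ of $w$ satisfies $1+S(w)=2\bigl(1+S(w')\bigr)$.

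Given these two facts, the induction is essentially two lines. The base case $w=v$ is the assertion $N(v) < N(v)(1+S(v))/S(v) = N(v)+N(v)/S(v)$, which holds because the rebalance just absorbed the new key $u$, so $N(v)\ge 1$ and hence $N(v)/S(v)>0$. For the inductive step, assume $N(w) < N(v)(1+S(w))/S(v)$ and let $w'$ be a child of $w$; dividing this by $2$ and substituting $1+S(w)=2(1+S(w'))$ yields $N(w)/2 < N(v)(1+S(w'))/S(v)$, whence $N(w')\le N(w)/2 < N(v)(1+S(w'))/S(v)$, which is the claim for $w'$. Every descendant of $v$ --- including the empty positions below the occupied region, for which $N=0$ and the bound is immediate --- is reached from the base case by finitely many applications of the step, so the lemma follows.

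I do not expect a genuine obstacle here; the parts that need care are bookkeeping. The delicate one is orienting the splitting bound correctly: the factor $2$ supplied by ``each child gets at most $N(w)/2$'' must exactly cancel the factor $2$ in $1+S(w)=2(1+S(w'))$, and this works only because the root of each rebuilt subtree absorbs one element \emph{before} the split, making the relevant quantity $\lceil (N(w)-1)/2\rceil$ rather than $\lceil N(w)/2\rceil$; if a child could receive $\lceil N(w)/2\rceil$ elements the argument (and in fact the lemma) would break, as one already sees for $N(w)=5$ in a subtree with $S=7$, where a $3$/$1$ split gives $3 \not< 5\cdot 4/7$. The only other thing to check is that strictness propagates, which it does, since the base case is strict and the inductive step weakens the $N$-side of the inequality only with a non-strict $\le$.
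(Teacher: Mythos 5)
Your proof is correct and takes essentially the same route as the paper's: the paper combines the perfect-balance bound $N(w)\le N(v)\,2^{d(v)-d(w)}$ with $(1+S(w))/S(v) > (1+S(w))/(1+S(v)) = 2^{d(v)-d(w)}$, and your induction simply unrolls this level by level using $N(w')\le N(w)/2$ together with $1+S(w)=2(1+S(w'))$. Your observation that the even split (not mere compactness in layer count) is the property actually needed corresponds to the paper's appeal to $v$ being ``perfectly balanced'' after the rebuild.
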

\begin{proof}
Observe that $(1 + S(w))/S(v) > (1 + S(w)) / (1 + S(v)) = 2^{d(v) - d(w)}$, and that $v$ is now perfectly balanced, so each descendant $w$ of $v$ has size at most $N(v) 2^{d(v) - d(w)}$. Substituting gives the desired result.
\end{proof}

\pagebreak[3]
\begin{theorem}[\cite{brodal2002cache}, Theorem 3.1]\label{thm:log2}
Inserting a key as described above takes amortized $O(\log_B N + (\log N)^2 / B)$ memory transfers.
\end{theorem}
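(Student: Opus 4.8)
The plan is to separate one insertion's cost into (i) locating the slot $u$ where the new key belongs and (ii) the at most one complete rebalancing it triggers, and then to charge the (amortization-heavy) part (ii) to the insertions responsible for it. Locating $u$ is a root-to-leaf descent in the vEB layout, costing $O(\log_B N)$ transfers by the vEB search bound of the previous subsection; as a side effect it brings into cache every block on the root-to-$u$ path. If a rebalance at an ancestor $v$ of $u$ is then triggered, it consists of two inorder traversals of the subtree rooted at $v$ — one to read its $N(v)$ keys into a scratch array, one to write the $N(v)+1$ keys back packed into $\log_2 N(v)+O(1)$ layers — so by Theorem~\ref{thm:traversal} each traversal costs $O(\log_B N + S(v)/B)$; but the path to $v$ is already resident, so the navigation term is absorbed into part (i) and I would charge the rebalance only $O(S(v)/B + 1)$. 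Incrementing $H$ and re-allocating the array at the root costs an extra $O(N/B)$, but this happens $O(\log N)$ times and each time $S(\text{root}) = O(N)$, so it can simply be folded into (ii) as $O(\log N)$ additional rebalance events, each of size $O(N)$.

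Two facts then drive the amortization of $\sum_e O(S(v_e)/B + 1)$, the sum over all rebalance events $e$ in a run of $n$ insertions, where $v_e$ is the rebalanced node. Fact (a): apart from the $O(\log N)$ root re-allocations, every rebalance event is triggered by a distinct insertion whose slot fell off the bottom of the tree (the rebalanced node is the nearest ancestor of that slot having room), so there are at most $n$ of them; this absorbs all the $+1$ terms into an $O(n)$ total, i.e.\ $O(1)$ amortized. Fact (b): a node $v$ at depth $d$, with complete-subtree size $S(v) =: S_d$, cannot be rebalanced twice within fewer than $\Omega(\Delta S_d)$ insertions into its subtree, provided $S_d = \Omega(1/\Delta)$. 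To see this, note $v$ is perfectly balanced right after the first rebalance, while the second rebalance requires the child $c$ of $v$ on the triggering path to have reached density $N(c) \ge \tau_{d+1} S(c)$; Lemma~\ref{lem:rebalance} bounds $N(c)$ just after the first rebalance by $N(v)(1+S(c))/S_d < \tau_d(1+S(c)) = \tau_d (S_d+1)/2$, and since $\tau_{d+1}-\tau_d = \Delta$ and $S(c) = (S_d-1)/2$, the count $N(c)$ — hence the number of insertions entering the subtree of $v$ — must have grown by $\tfrac12\Delta(S_d-1) - O(1) = \Omega(\Delta S_d)$.

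Now I would sum. At a depth $d$ with $S_d = \Omega(1/\Delta)$, the subtrees rooted at the various depth-$d$ nodes are disjoint, so they receive at most $n$ insertions in total; by Fact (b) the number of rebalances at depth $d$ is $O(n/(\Delta S_d))$ plus at most one per depth-$d$ node ever rebalanced, so these rebalances, costing $O(S_d/B)$ apiece, total $O(n/(\Delta B)) + O(S_d |V_d|/B)$, where $|V_d| \le 2^{d-1}$ counts the depth-$d$ nodes ever rebalanced. Since $\Delta = \Theta(1/H)$ and $H = \Theta(\log N)$, the first term summed over the $O(H)$ depths is $O(nH/(\Delta B)) = O(n(\log N)^2/B)$; the second, using $S_d = 2^{H-d+1}-1$ so that $S_d|V_d| < 2^H = O(n)$ per depth, contributes $O(nH/B) = O(n(\log N)/B)$. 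The deepest $O(\log\log N)$ levels, where $S_d = O(1/\Delta)$ and Fact (b) is vacuous, are bounded directly: their rebalances cost $O(S_d/B) = O(1/(\Delta B))$ each, there are at most $n$ of them by Fact (a), so altogether $O(n/(\Delta B)) = O(n(\log N)/B)$. Adding the $O(n)$ from the $+1$ terms, the total rebalancing cost is $O(n + n(\log N)^2/B)$; adding the $O(n\log_B N)$ spent on searches and dividing by $n$ gives the claimed amortized bound $O(\log_B N + (\log N)^2/B)$.

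The step I expect to be most delicate is the last two paragraphs' accounting of the low-order terms. Each rebalance carries an irreducible additive $O(1)$ (and, before the caching observation, an $O(\log_B N)$), and naively paying such a term at each of the up-to-$H$ ancestors of an insertion would produce an $O(\log N)$ summand that exceeds the target leading term $O(\log_B N)$ whenever $B$ is large; the resolution — charge navigation to the search, and bound the additive slack globally via Fact (a) rather than level by level, while using Fact (b) only where $S_d$ is large enough for it to bite — is exactly what keeps the leading term at $O(\log_B N)$. A secondary nuisance is that $S_d$ depends on the current $H$ and hence changes at re-allocations; this is harmless because a re-allocation rebalances the whole tree and is already accounted for as one of the $O(\log N)$ extra size-$O(N)$ events.
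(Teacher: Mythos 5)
Your argument is, at its core, the same as the paper's: both hinge on Lemma~\ref{lem:rebalance} plus the observation that pushing a child $w$ of $v$ from the post-rebalance density $\tau_{d(v)}$ up to the trigger density $\tau_{d(w)}$ requires $\Delta S(w)-O(1)$ insertions into $w$'s subtree, and both then amortize the $O(S(v)/B)$ traversal cost of a rebalance against those insertions, handling the $H$-increment separately. The difference is bookkeeping: the paper charges $O(S(v)/(B\,\Delta S(w)))=O(H/B)$ to each insertion per ancestor (banker's method, at most $H$ charges per insertion), while you count rebalance events level by level and sum $O(S_d/B)$ costs per depth. Your version buys some extra rigor that the paper glosses over --- the additive $O(1)$ per rebalance and the $O(\log_B N)$ navigation term handled via ``each insertion triggers at most one rebalance,'' and the deepest levels where $\Delta S_d=O(1)$ treated directly --- at the price of a slightly more delicate summation (in particular, $S_d$ and the $|V_d|$ bound implicitly use $2^H=O(n)$, i.e.\ starting from an empty tree, and the changing $H$ makes the per-depth bound a bit less clean than the per-insertion charge, whose ratio $O(1/(B\Delta))$ is invariant).

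One step in your Fact~(b) needs the paper's precise anchoring. You anchor at the moment just after the \emph{previous rebalance of $v$ itself} and conclude that the growth of $N(c)$ equals the number of insertions into $v$'s subtree; but if an ancestor of $v$ is rebalanced in between, that rebalance can move elements into $c$'s subtree without any insertions there, so the inference fails as written. The paper avoids this by anchoring at the last rebalance of $v$ \emph{or any of its ancestors}: since $\tau_{d'}\le\tau_{d(v)}$ for $d'\le d(v)$, Lemma~\ref{lem:rebalance} restores the same cap $N(c)<\tau_{d(v)}(1+S(c))+O(1)$ after such an ancestor rebalance, and the $\Omega(\Delta S(c))$ insertions forced after that moment still lie between your two consecutive rebalances of $v$, so your counting (and disjointness of the intervals) goes through unchanged. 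This is a small, immediately fixable hole --- your remark about re-allocations touches the cost of ancestor rebuilds but not this element-migration effect --- and with that patch the proof is correct.
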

\begin{proof}
Traversing the tree to find the insertion location takes time $O(\log_B N)$. It only remains to bound the work done by rebalancinng.

Suppose we rebalance node $v$, due to an insertion below $v$. Then there is some child $w$ of $v$ for which $N(w) \ge \tau_{d(w)} S(w)$ (otherwise, we would not be rebalancing $v$). Consider the last time node $v$ or any of its ancestors was rebalanced. Immediately after this rebalancing, Lemma \ref{lem:rebalance} gives that $N(w) < \tau_{d(v)} (1 + S(w)) \le \tau_{d(v)} S(w) + 1$. Thus, there must have been at least $\tau_{d(w)} S(w) - \tau_{d(v)} S(v)-1 = \Delta S(w)-1$ insertions at node $w$ or its descendants. The rebalance takes time $O(\log_B N + S(v)/B)$ by Theorem \ref{thm:traversal}, since it just consists of inorder traversals. The $O(\log_B N)$ term is accounted for by the tree traversal, and the $O(S(v)/B)$ term can be accounted for using the banker's method by charging $O(S(v)/B\Delta S(w)) = O(2/(B\Delta) = O(H/B)$ to each of the insertions under $w$ (noting that $S(v) / S(w) \approx 2 = O(1)$). Each insertion is charged at most $H$ times (once for each ancestor) between rebalances; thus, rebalancing amortizes to $O(H^2/B) = O((\log N)^2/B)$. 

Incrementing $H$ is only done every $O(N)$ insertions and consists of two tree traversals. The tree traversals, as before, take time $O(\log_B N + N/B)$, where the $\log_B N$ term is accounted for in the search, and the $N/B$ term can be (more than) accounted for by placing 1 additional credit on each insertion. Recomputing the indices of the vEB tree can be done during the in-order traversal by computing parent and child indices at each node. As argued in Section \ref{s:indices}, computing parent and child indices takes time $O(\log H) = O(\log \log N)$; thus, this amortizes to $O((\log \log N)/B)$ and does not affect the overall runtime, since that runtime already contains an $O((\log N)^2/B)$ term. Adding everything up, the overall amortized runtime is $O(\log_B N + (\log N)^2 / B)$, as desired.
\end{proof}

We note that the van Emde Boas layout is not special in the above data structure. Although we will lose the memory access bounds given by Theorem \ref{thm:log2}, we may replace the van Emde Boas layout by any other layout. In particular, in Section \ref{s:experiments}, we consider the ramifications of using a breadth-first layout instead of a vEB layout. We also notice that this data structure needs very little memory: a single array suffices to store the keys. Beyond that, we only need to store as much memory as necessary to compute the parent and child indices of any given node efficiently. For a BFS layout, this is a simple multiplication or division by 2; for the vEB layout, this can get more complex; see Section \ref{s:indices}.

\section{Experimental Results}\label{s:experiments}

\subsection{Speed Comparison}
We implement ordered dictionaries that support adding and searching for elements using splay trees, B-trees of different orders, and the structure seen in Section \ref{s:main-ds} (which we'll call the ``small tree'', since it doesn't use much memory), with both a BFS layout and a vEB layout. We find that, in practice, performance seems to be best for $\tau_1 = 1/2$, so we only include that case here. The B-tree of order 16 performed best, so we include that case in our tests, and we include the B-tree of order 2 for comparison. The B-tree of order 16 should be thought of as a cache-friendly structure that has been tuned for the block size of our particular machine. We also include {\tt std::set}, but we find that this is largely hopeless: {\tt std::set} is easily beaten in just about all tasks by just about all our structures.

\begin{figure}
    \centering
    \begin{minipage}{.5\textwidth}
        \centering
        \includegraphics[width=1\textwidth]{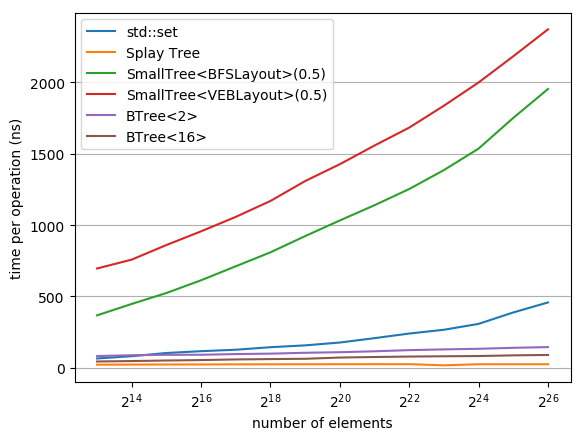} % first figure itself
        \caption{In-order insertion}
        \label{fig:beginspeed}
    \end{minipage}\hfill
    \begin{minipage}{.5\textwidth}
        \centering
        \includegraphics[width=1\textwidth]{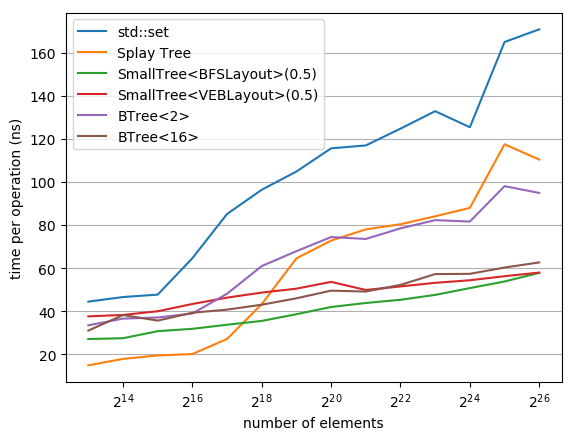} % second figure itself
        \caption{In-order traversal}
    \end{minipage}
% \end{figure}
% \begin{figure}
%     \centering
    \begin{minipage}{.5\textwidth}
        \centering
        \includegraphics[width=1\textwidth]{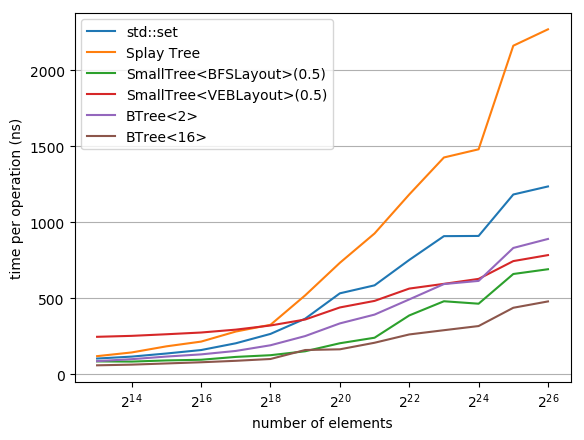} % second figure itself
        \caption{Random insertion}
    \end{minipage}\hfill
    \begin{minipage}{.5\textwidth}
        \centering
        \includegraphics[width=1\textwidth]{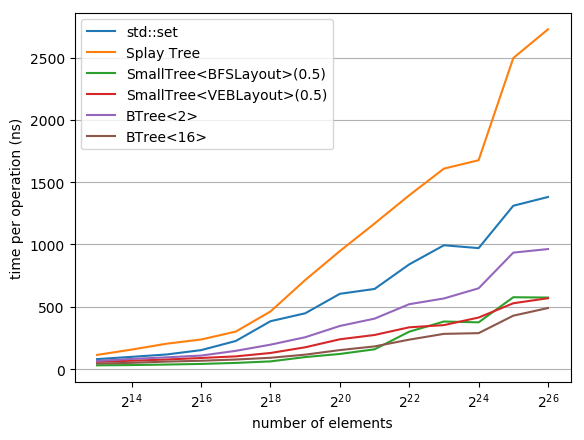} % second figure itself
        \caption{Random traversal}
        \label{fig:endspeed}
    \end{minipage}
\end{figure}

\pagebreak[3]
We run four experiments on each data structure:
\begin{enumerate}[(1)]
    \item {\it In-order insertion}: $N$ elements are inserted into the structure in sorted order. 
    \item {\it Random insertion}: $N$ elements are inserted into the structure in random order.
    \item {\it In-order traversal}: After a random insertion, every element is accessed, in sorted order.
    \item {\it Random traversal}:
    After a random insertion, every element is accessed, in a random order different from the insertion order.
\end{enumerate}

Figures \ref{fig:beginspeed} through  \ref{fig:endspeed} show the experimental results. We make several observations:
\begin{enumerate}[(1)]
    \item The in-order insertion plot clearly shows the $O((\log N)^2)$ dependence of the small tree insertion in both BFS and vEB layouts.
    \item The B-tree of order 16 is, unsurprisingly, all-around one of the fastest trees. This is unsurprising, since, as mentioned before, we can think of this tree as the explicitly-tuned ``cache-friendly'' data structure.
    \item Also unsurprisingly, the performance of the vEB layout, especially on larger sizes, is usually between that of the B-tree of order 2 (our ``standard'', unoptimized, not-cache-friendly search tree) and the B-tree of order 16 (our explicitly-tuned ``cache-friendly'' tree).
    \item The vEB layout seems to be generally slower than the BFS layout, only matching the latter's performance in traversal for very large input sizes. It is likely that there may be a crossover point past $2^{26}$ elements, but we stopped our experiments here as the runtime was becoming prohibitively expensive. The BFS layout benefits from the memory-friendliness of the small tree and the instantaneous computation of child and parent indices of any given node, but not the cache-friendless of the vEB layout. The gap can only be due to the difficulty of computing vEB indices, since the data structures are otherwise identical. See Section \ref{s:indices} for an expanded discussion of this.
    \item The splay tree performs surprisingly poorly across the board, even in inorder traversal. We suspect that this is due to poor memory locality and usage. It should be mentioned, though, that the poor inorder traversal performance disappears if the elements are inserted in sorted order; in this case, the splay tree suddenly becomes the {\it fastest} at inorder traversal. For brevity, we have not included these plots, since they look otherwise quite similar to the shown ones.
    \item {\tt std::set} performs all-around poorly. This deficiency is harder to explain.
   
\end{enumerate} 

\subsection{Calculating Indices in the van Emde Boas Layout}\label{s:indices}

Calculating the parent and child indices of an element in a van Emde Boas layout is nontrivial. Simply storing parent and child pointers/indices wastes space if it is possible to efficiently calculate these values on an on-demand basis. Common approaches involve implementing a conversion between the easy-to-compute breadth first indexing, where children are calculated with $c_1 = 2i$ and $c_2 = 2i + 1$ and parents with $p = i/2$, to van Emde Boas indexing. Many attempts to quickly calculate this conversion from breadth-first notation to van Emde Boas take time $O(\log h)$, where $h$ is the height of the tree. No fast constant-time algorithm seems to exist as of yet, and the time to compute this conversion takes longer than a cache miss does with these existing algorithms. Thus, keeping track of parent and child pointers still remains to be the quickest method, although it takes extra space within the data structure.

However, in our exploration of this topic, we attempted to improve on existing methods, seen in \cite{brodal2002cache} and implemented in \cite{coriolan} and \cite{bcopeland}. Both of these methods are sub-optimal because they either use $O(\log n)$ persistent memory, as seen in \cite{coriolan}, or $O(\log n)$ memory at runtime in the form of stack frames, as seen in \cite{bcopeland}.

While both of these methods both use $O(\log h)$ time, their use of external memory is unnecessary. We introduce a constant-memory algorithm for converting indices in breadth-first notation to the corresponding position which uses $O(\log w)$ time, where $w$ is the word size of the machine. If memory usage comes at a steep premium and cache misses are very expensive, using this algorithm with the vEB tree could be useful. The algorithm can be found in Appendix A, implemented in C++.

\begin{figure}
    \centering
    \includegraphics[scale=0.8]{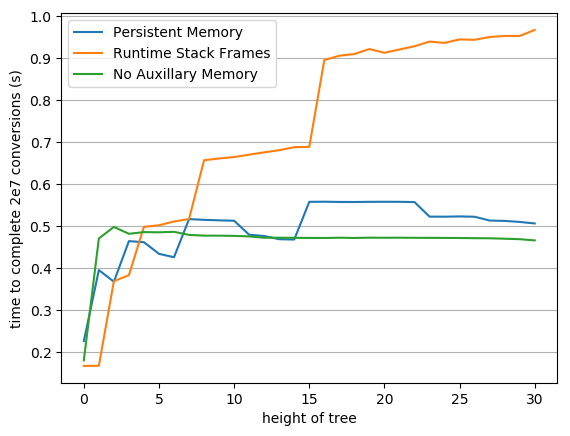}
    \caption{Runtimes of different methods of converting BFS notation to vEB notation}
    \label{fig:bfstoveb}
\end{figure}

As seen in Figure \ref{fig:bfstoveb}, our algorithm for converting BFS indices to vEB indices performs the fastest out of the three when the height is greater than seven, which corresponds to no more than 127 nodes in the tree, smaller than what is seen in a majority of applications. Even then, as discussed above, simply using a BFS layout outperforms the vEB layout due to the cost of calculating indices. Until a more efficient algorithm for calculating indices is found, it is probably best to not use the vEB layout in practice.

\section{Conclusion}
Cache-oblivious algorithms and more specifically cache-oblivious search trees are an exciting area to explore because of their relevance both in theory and in practice. In theory, using the cache-oblivious model allows us to think about how a computer uses its cache, and optimize memory usage accordingly through some very nice techniques such as the van Emde Boas layout of a binary search tree. In practice, thinking about the cache or reducing memory usage is an excellent way to improve the performance of any program; in our case, if the size of the cache is not known in advance, we find that the cache-oblivious structure (or its simpler cousin, the BFS-layout small tree) outperforms naive implementations such as the B-tree of order 2 or the splay tree. 

Finally, it is important to note the poor performance of the \texttt{std::set} class, which was in every way outperformed by almost every data structure studied in this paper. Why this is and how to possibly improve on its implementation could be an area of further study.

\bibliographystyle{plainnat}
\bibliography{main.bib}

\newpage
\section*{Appendix A} \label{appBFStovEB}
\verbatiminput{BFStovEB.txt}

\end{document}